\begin{document}

\title{Depth-Optimized Reversible Circuit Synthesis}
\author{Mona~Arabzadeh, Morteza~Saheb~Zamani, Mehdi~Sedighi, Mehdi~Saeedi}
\authorrunning{M. Arabzadeh, M. Saheb Zamani, M. Sedighi, M. Saeedi} 

\institute{
A preliminary and partial version of this paper was presented at the 2011 International Workshop on Logic and Synthesis, San Diego, USA.\\\\
M. Arabzadeh, M. Saheb Zamani, M. Sedighi, M. Saeedi \at
Computer Engineering Department, Amirkabir University of Technology, Tehran, Iran.\\
\email{\{m.arabzadeh, szamani, msedighi, msaeedi\}@aut.ac.ir}             \\
M. Saeedi is currently with the Department of Electrical Engineering, University of Southern California, Los Angeles, CA, USA 90089-2562. \email{msaeedi@usc.edu}
}


\maketitle

\begin{abstract}%
In this paper, simultaneous reduction of circuit depth and synthesis cost of reversible circuits in quantum technologies with limited interaction is addressed.
We developed a cycle-based synthesis algorithm which uses negative controls and limited distance between gate lines. To improve circuit depth, a new parallel structure is introduced in which before synthesis a set of disjoint cycles are extracted from the input specification and distributed into some subsets. The cycles of each subset are synthesized independently on different sets of ancillae. Accordingly, each disjoint set can be synthesized by different synthesis methods.
Our analysis shows that the best worst-case synthesis cost of reversible circuits in the linear nearest neighbor architecture is improved by the proposed approach. Our experimental results reveal the effectiveness of the proposed approach to reduce cost and circuit depth for several benchmarks.

\keywords{Reversible logic \and Synthesis \and Linear nearest neighbor architecture \and Circuit depth}

\end{abstract}
\section{Introduction} \label{sec:intro}
Boolean reversible circuits have attracted attention as components in several quantum algorithms including Shor's quantum factoring \cite{markov2012} and stabilizer circuits \cite{Aaronson&Gottesman04}. In the recent years, considerable efforts have been made to synthesize a Boolean reversible function by a set of quantum gates \cite{saeedi2012}.

The proposed technologies for quantum computing
suffer from practical limitations for implementation.
For example, popular quantum technologies allow computation on a few qubits in a
linear nearest neighbor (LNN) architecture where only adjacent qubits can interact \cite{cheung07}.
Additionally, physical qubits are fragile and can hold their states only for a limited time, called \emph{coherence time}, \cite{meter06}.
To reflect technological constraints in
the synthesis stage, different technology-specific cost metrics have
been introduced.

\begin{itemize}
  \item \emph{Two-qubit cost} is the number of two-qubit gates of any type and the number of one-qubit gates (reported separately) in a given circuit. The number of two-qubit gates for an $n$-qubit Toffoli gate (for $n$ $\geq$ 3) is estimated as $10n-25$ \cite{maslov11}. \emph{Quantum cost} (QC) is the number of NOT, CNOT, controlled-V and controlled-V$^\dag$ gates required to implement a given reversible function.
  \item \emph{Interaction cost} is the distance between gate qubits for any two-qubit gate. Quantum circuit technologies with 1D, 2D and 3D interactions exist \cite{cheung07}. Interaction cost for a circuit is calculated by a summation over the interaction costs of its gates.
  \item \emph{Number of ancillae} and \emph{garbage qubits} reflect the limited number of qubits in the current quantum technologies.
  \item \emph{Depth} is the largest number of elementary gates on any path from inputs to outputs in a circuit. Reducing circuit depth can increase coherence time.

\end{itemize}

Synthesis of reversible Boolean circuits has an exponential search space. Consequently, many heuristic algorithms have been proposed to consider the effects of quantum cost and two-qubit cost in the synthesis stage [7-10].
Additionally, several post-process optimization methods have been developed to improve quantum cost \cite{Maslov07,miller10,maslov11}, interaction cost \cite{saeedi11QIP,hirata2011}, and depth \cite{MaslovTCAD08}. However, the number of algorithms which consider different parameters simultaneously --- the focus of this work --- is very limited.

Besides technological limitations, studying theoretical aspects of circuits with either limited interactions among qubits of gates or limited depth attracts interest in complexity theory. 
For example, NC$^i$ is the class of decision problems solvable by a uniform family of Boolean circuits with polynomial size, depth of $O(\log^i n)$ and fan-in=2. QNC is the class of constant-depth quantum circuits without fanout gates \cite{moore09}.

In this paper, a synthesis algorithm for Boolean reversible circuits is proposed which uses a cycle-based strategy to synthesize circuits for the LNN architecture. The proposed technique leads to improved synthesis costs as compared to the best prior methods for several benchmarks.
Moreover, a parallel structure for reversible Boolean circuits is presented which significantly reduces circuit depth with 2$n$ ancillae. Overall, our circuits can be considered as depth-optimized reversible circuits for the LNN architecture.

This paper is organized as follows.
Basic concepts are introduced in Section \ref{sec:pre}. Related synthesis and post-process optimization methods are reviewed in Section \ref{sec:relat}. The proposed cycle-based synthesis algorithm for the LNN architecture is described in Section \ref{sec:Mk-cycle}. Section \ref{subsec:parallel} presents a parallel structure to reduce circuit depth. Experimental results are reported in Section \ref{sec:exper}, and Section \ref{sec:conclu} concludes the paper.


\section{Basic Concepts} \label{sec:pre}
In this section, preliminary concepts are briefly introduced. Further background can be found in \cite{saeedi2012}.

\textbf{Permutation Function.} Let $B$ be any set and define $f$$:$ $B$ $\rightarrow$ $B$ as a one-to-one and onto transition function. The function $f$ is a \emph{permutation} function, as applying $f$ to $B$ leads to a set with the same elements of $B$ and probably in a different order. If $B =\{1, 2, 3, . . ., m\}$, there exist two elements $b_i$ and $b_j$ belonging to $B$ such that $f(b_i) = b_j$. A $k$-cycle with \emph{length} $k$ is denoted as $(b_1, b_2, ..., b_k)$ which means that $f(b_1) = b_2, f(b_2) = b_3,...,$ and $f(b_k) = b_1$. A given $k$-cycle $(b_1, b_2, . . . ,b_k)$ could be written in different ways, such as $(b_2, b_3, . . . b_k, b_1)$. Cycles $c_1$ and $c_2$ are called \emph{disjoint} if they have no common members. Any permutation can be written uniquely, except for the order, as a product of disjoint cycles. If two cycles $c_1$ and $c_2$ are disjoint, they can \emph{commute}, i.e., $c_1c_2$ = $c_2c_1$. A cycle with length two is called \emph{transposition}. A cycle or a permutation is called \emph{even} (\emph{odd}) if it can be written as an even (odd) number of transpositions. When $k$-cycle is even (odd) then $k$ is odd (even).


\textbf{Reversible Function.} An $n$-input, $n$-output, fully specified Boolean function $f$$:$ $B$ $\rightarrow$ $B$ over variables $X = \{ x_0,...,x_{n-1}\}$ is called \emph{reversible} if it maps each input pattern to a unique output pattern. Each reversible function can be considered as a permutation function.
The added lines to a circuit are called \emph{ancillae} and typically start out with a 0 or 1.


\textbf{Reversible Gate.}
An $n$-input, $n$-output gate is reversible if it realizes a reversible function.
A multiple-control Toffoli gate can be written as C$^m$NOT(C; t), where $C = \{i_1, \dots , i_m\}$ is the set of \emph{control} lines, $t =\{j\}$ with $C \cap t = \emptyset$ is the \emph{target} line and $0\leq i,j\leq {n-1}$.
A control line may be \emph{positive} (\emph{negative}) which means that if its value is one (zero), the value of the target is inverted. For $m$=0 and $m$=1, the gates are called NOT (N) and CNOT (C), respectively. For $m$=2, the gate is called C$^2$NOT or Toffoli (T).
The SWAP($a$,$b$) gate changes the value of two qubits $a$ and $b$, and can be constructed by three CNOT gates C($a$,$b$)C($b$,$a$)C($a$,$b$).
The controlled-V (controlled-V$^\dag$) gate changes the value of its target line using the transformation given by the matrix V (V$^\dag$) if the control line has the value 1.
\[
\small
V = \frac{{1 + i}}{2}\left[ {\begin{array}{*{20}c}
   1 & { - i}  \\
   { - i} & 1  \\
\end{array}} \right],V^ \dag   = \frac{{1 - i}}{2}\left[ {\begin{array}{*{20}c}
   1 & i  \\
   i & 1  \\
\end{array}} \right]
\]
\normalsize

\section{Related Work} \label{sec:relat}
In this section, we review prior synthesis and optimization techniques that are used in this paper.

In \cite{Shende03}, an NCT-based synthesis method is proposed which decomposes a given
cycle into a set of transpositions. To implement an arbitrary transposition $(a,b) (c,d)$ for distinct $a$, $b$, $c$, $d$ $\neq$ 0, $2^i$, the authors introduced three subcircuits, namely $\pi$, $\kappa_0$ and $\pi^{-1}$ (the inverse of $\pi$), where the $\kappa_0$ circuit, C$^{n-2}$NOT($a_2,...,a_{n-1};a_0$), implements a fixed transposition ($2^n-4$, $2^n-3$) ($2^n-2$, $2^n-1$). Accordingly, a synthesis algorithm was proposed to transform $a$, $b$, $c$ and $d$ to $2^n-4$, $2^n-3$, $2^n-2$ and $2^n-1$, respectively. By cascading $\pi$, $\kappa_0$ and $\pi^{-1}$, an arbitrary transposition can be implemented with quantum cost $34n-64$.

The NCT-based synthesis method in \cite{Shende03} was extensively improved in \cite{saeedi10},
$k$-cycle method hereafter. In the $k$-cycle method, a given cycle of length $\geq 6$ is decomposed into a set of cycles of lengths $<6$, called \emph{elementary cycles}. Next, a set of synthesis algorithms was proposed to synthesize different elementary cycles, i.e., a pair of 2-cycles, a single 3-cycle, a pair of 3-cycles, a single 5-cycle, a pair of 5-cycles, a single 2-cycle (4-cycle) followed by a single 4-cycle (2-cycle) and a pair of 4-cycles. Similar to \cite{Shende03}, 0 and $2^i$ terms are fixed before synthesis
because their effect on their synthesis results is negligible \cite{saeedi10}. NCT gates with positive controls are used in both \cite{Shende03} and \cite{saeedi10}.
The effect of decomposition on the result of \cite{saeedi10} was considered in \cite{SaeediMEJ10} where a cycle-assignment technique based on graph matching was proposed. The worst-case quantum cost for synthesizing an arbitrary reversible function on $n$ lines is $8.5n2^n+o(2^n)$ in \cite{saeedi10}.

In \cite{MaslovTCAD08}, the authors introduced a post-process optimization algorithm to reduce the depth of a given quantum circuit. To achieve this, a set of circuit templates (circuit identities) was proposed to reduce quantum cost and circuit depth. The suggested templates are applied to change either gate locations or control/target positions in a subcircuit to parallelize more gates.
The introduced templates were used by a greedy algorithm which starts from gate $i$ and traverses the gates afterwards. At each step, the algorithm moves gates to left whenever possible and applies templates to check whether other gates can be moved to left or not. If no change is possible, it starts the same process from gate $i+1$.

In \cite{saeedi11QIP}, a synthesis flow was proposed to improve the interaction cost of a given quantum circuit. The authors studied the exact synthesis of some small gates for the LNN architecture. The proposed optimal circuits are used to simplify larger circuits. Besides, some circuit templates are introduced to reduce the number of SWAP gates. Finally, local and global reordering of input qubits are considered to reorder gate qubits for improving the interaction cost. The proposed techniques were consolidated in a unified design flow to implement a given circuit with arbitrary interactions for architectures with limited interactions.

Fig. \ref{fig:examp1}-a shows a 3-input full adder with depth 4 \cite{MaslovTCAD08} and six elementary gates. Actually, depth 4 is optimal since four qubits are involved in the fourth qubit \cite{MaslovTCAD08}. Fig. \ref{fig:examp1}-b shows the same circuit after inserting SWAP gates to make the gate qubits adjacent with QC=24 and depth=23. Fig. \ref{fig:examp1}-c illustrates the same circuit after applying the method in \cite{saeedi11QIP} for reducing the number of SWAP gates where QC=18 and depth=17.

\begin{figure}
\centerline{\includegraphics{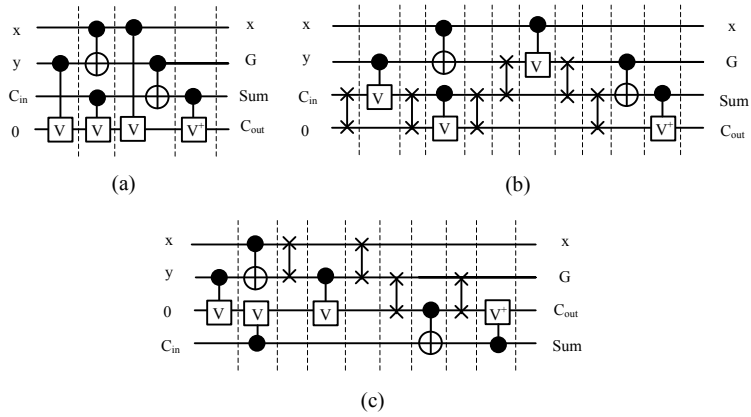}}
\caption{\small(a) 3-input reversible full adder with optimal depth 4 \cite{MaslovTCAD08}, (b) the circuit in (a) after inserting SWAP gates and (c) reducing the number of SWAP gates by \cite{saeedi11QIP}.}
\label{fig:examp1}
\end{figure}

\section{The Proposed Cycle-Based Synthesis Method for Interaction Cost} \label{sec:Mk-cycle}
The main contribution of \cite{saeedi10} is to propose a cycle-based synthesis approach with the primary focus
on quantum cost as the sole metric considered. However, another important implementational constraint, namely interaction cost, is considered besides the quantum cost in our proposed cycle-based method in this section.
To do that, we improve the $k$-cycle method by using negative controls and adapting the synthesis algorithms of elementary cycles to the LNN architecture.
Particularly, two new elementary odd cycles, a 2-cycle and a 4-cycle, are included to improve quantum cost. These odd cycles are synthesized as a pair of 2-cycles and a pair of 4-cycles in \cite{saeedi10} with one ancilla. Odd cycles need one ancilla in the NCT library for the implementation \cite{Shende03}. In our experiments, we used this ancilla for the decomposition of complex gates into elementary gates.
Additionally, 0 and $2^i$ terms are not fixed before synthesis to be used in the proposed parallel structure as discussed in Section \ref{subsec:parallel}.

\textbf{Negative controls} can reduce the number of elementary gates in the $\kappa_0$, $\pi$ and $\pi^{-1}$ circuits both with and without considering nearest neighbor restriction. Multiple-control Toffoli gates with at least one positive control can be simulated as efficiently as complex Toffoli gates with only positive controls \cite{MaslovTCAD08}.
By using CNOT and Toffoli gates with negative controls, one may not fix 0 and $2^i$ terms before synthesis as compared with the methods in \cite{Shende03,saeedi10}.


\textbf{Cycle Construction Length (CCL)} is defined as the number of lines required to implement a given cycle of length $L$. In theory, the minimum CCL is $\log_2L$. To implement the elementary cycles by NCT gates, at most two more lines are required in the proposed approach --- one to avoid Toffoli gates without any positive control in the $\kappa_0$ circuit, and one to improve circuit cost in the $\pi$, $\pi^{-1}$ circuits. Accordingly, we set CCL$_{(2)}$=2, CCL$_{(2,2)}$=4, CCL$_{(3)}$=3, CCL$_{(3,3)}$=5, CCL$_{(4)}$=4, CCL$_{(4,2)}$=5, CCL$_{(4,4)}$=5, CCL$_{(5)}$=5 and CCL$_{(5,5)}$=6.  For an $n$-line circuit, lines required to construct a given cycle, CCL in total, can be selected in $n$ $\times$ $(n-1)$ $\times$ ... $\times$$(n-\rm{CCL}-1)$ different ways. To improve interaction cost and depth
we place the selected lines close to each other in the middle of the $\kappa_0$ circuit at positions $k$, $k \pm 1$ and $ k \pm 2$ for $k=\lfloor n/2 \rfloor$. Details are discussed later.

\begin{figure*}
\centerline{\includegraphics[width=5in]{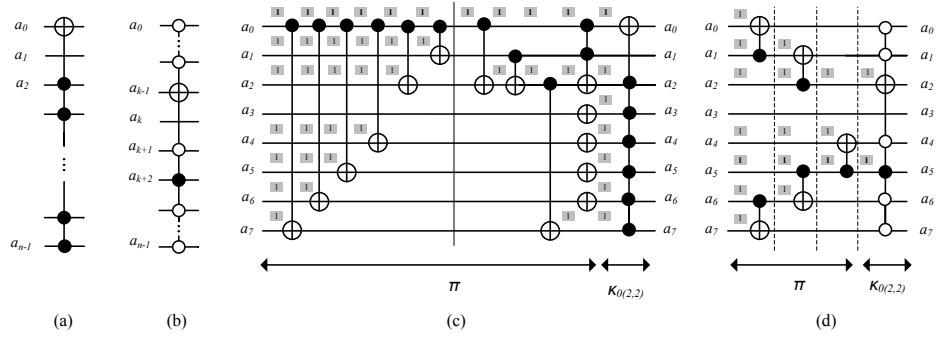}}
\caption{\small(a) The ${\kappa_{0(2,2)}}$ circuit in \cite{Shende03,saeedi10}. (b) The proposed ${\kappa_{0(2,2)}}$ circuit. Each control at position $i$, $0 \leq  i \leq n-1, i \neq k+2$ is negative. (c) An example of $\pi$ circuit in \cite{saeedi10}. $a_0$ is used to control CNOTs in the first part. The second subcircuit is the circuit in \cite[Theorem 3.1]{saeedi10}. (d) An example of $\pi$ circuit in the proposed method. Here, $k$=$3$. Refer to Table \ref{t:EC}.}
\label{fig:interval}
\end{figure*}

To synthesize a given elementary cycle, one needs to change input terms into the terms specified by the $\kappa_0$ circuit. This is done by converting the input terms into \emph{intermediate terms} specified by the $\pi$ circuit. Afterwards, the intermediate terms are transformed into $\kappa_0$ terms by a few specific gates, called \emph{static gates}. In the proposed method, the control and target lines in the $\pi$ circuit are selected such that interaction cost can be reduced. Since $\kappa_0$ cycles are constructed in the middle of the circuit and the intermediate terms are designed with at least one ``1'', as boldfaced in column \emph{Int. Terms} in Table \ref{t:EC}, it is possible to select control and target lines of each gate with length $ \leq \lceil(n-\rm{CCL})/2+\rm{CCL} \rceil$. Considering two SWAP gates with cost 6 leads to QC$_{LNN}$ $\leq 3(n+\rm{CCL})$ for each gate. To reduce circuit depth, the gates required to fix bit positions at the first half and the second half are applied in parallel. Algorithm \ref{alg:one} provides the details.

\begin{algorithm}[t]
\SetAlgoNoLine
\KwIn{\\
\Indp $L$ $n$-bit input terms. Bit value at position $i$ of the $j$-th input term is $b_{(i,j)}$.\\
$L$ $n$-bit $\kappa_0$ terms. Bit value at position $i$ of the $j$-th $\kappa_0$ term is $b^{\kappa_0}_{(i,j)}$.\\
$Pivot$ is the boldfaced position in the intermediate terms in Table \ref{t:EC}.
}
\KwOut{The $\pi$ circuit.}

    \For{ $i$ in 0 to $L$}
    {
    	\If{$b_{(i,Pivot)}$$\neq$$1$}
    	{
            Set $b_{(i,Pivot)}$=1 by either a CNOT or a Toffoli gate;
        }

        \For{ $j$ in 0 to $Pivot$}
        {
            \If {$b_{(i,j)}$$\neq$ $b^{\kappa_0}_{(i,j)}$}
            {
                Find a position $p$: $b_{(i,p)}$=1, $b_{(k,p)}$$\neq$1 ($k<i$), $|p-j|$ is the minimum possible value, and $p\leq Pivot$;\\
              	Apply CNOT($p$;$j$);
            }

        }

        \For{ $j$ in $n$-$1$ to $Pivot$+$1$}
	 {
	     \If {$b_{(i,j)}$$\neq$ $b^{\kappa_0}_{(i,j)}$}
	     {
		Find a position $p$: $b_{(i,p)}$=1, $b_{(k,p)}$$\neq$1 ($k<i$), $|p-j|$ is the minimum possible value, and $p\geq Pivot$;\\
             	Apply CNOT($p$;$j$);
             }
         }
    }
\caption{Gate selection in the $\pi$ circuit}
\label{alg:one}
\end{algorithm}

The $\kappa_{0(2,2)}$ circuits in \cite{Shende03,saeedi10} and the proposed $\kappa_{0(2,2)}$ circuit are shown in Fig. \ref{fig:interval}-a and Fig. \ref{fig:interval}-b, respectively. Fig. \ref{fig:interval}-c illustrates one example of the $\pi$ circuit in \cite{saeedi10}. The input term is ``11110111'' which should be changed to the second term in the $\kappa_{0(2,2)}$ circuit in \cite{saeedi10}, i.e., ``11111101''. This is done by a circuit with QC=16 and depth=11. In contrast, ``11110111'' should be changed to ``00100100'' in the proposed method. Fig. \ref{fig:interval}-d shows the $\pi$ circuit with QC=5 and depth=3 based on Algorithm 1.


\subsection{Building Blocks}\label{subsubsec:BB}
In this section, direct synthesis of the suggested elementary cycles, i.e., (2), (2,2), (3), (3,3), (4,2), (4,4), (5), (5,5), is discussed.
Fig. \ref{fig:all-k} illustrates the $\kappa_0$ circuits of all elementary cycles. We give a full description of the synthesis method for a pair of 2-cycles first.

\textbf{(2,2)-synthesis:} To change $(a,b)(c,d)$ to $\kappa_{0(2,2)}$ terms:
\begin{itemize}
  \item At most $n$ NOT gates can be used to convert $a$ to ``0...1000...0''. Other terms $b$, $c$, and $d$ may be changed to new terms $b'$, $c'$ and $d'$, respectively.
  \item At most one CNOT gate conditioned on either the $i$-$th$ line $i$$\neq$$k+2$ (positive) or $i=k+2$ (negative) can be used to set the ($k-1$)-$th$ bit of $b'$. Next, at most $n-1$ CNOT gates conditioned on the ($k-1$)-$th$ bit can be applied to change the $j$-$th$ bit of $b'$ ($0 \leq j \leq n-1$, $j$$\neq$$k-1$) to ``0...1001...0''. $c'$, and $d'$ may be changed to new terms $c''$ and $d''$.
  \item At most one CNOT gate conditioned on either the $i$-$th$ line $i$$\neq$$k+2$ (positive) or $i=k+2$ (negative) can be used to set the $k$-$th$ bit of $c''$. Next, at most $n-1$ CNOT gates with positive control conditioned on the $k$-$th$ bit can be applied to change the $j$-$th$ bit of $c''$ ($0 \leq j \leq n-1$, $j\neq k$) to ``0...1010...0''. The last term $d''$ may be changed to a new term $d'''$.
  \item At most one CNOT gate conditioned on either the $i$-$th$ line $i$$\neq$$k+2$ (positive) or $i=k+2$ (negative) can be used to set the ($k+1$)-$th$ bit of $d'''$. Next, at most $n-1$ CNOT gates with positive control conditioned on the ($k+1$)-$th$ bit can be applied to change the $j$-$th$ bit of $d'''$ ($0 \leq j \leq n-1$, $j$$\neq$ $k+2$) to ``0...1111...0''.
  \item A Toffoli gate conditioned on the ($k-1$)-$th$ and the $k$-$th$ lines can be used to set the ($k+1$)-$th$ line. Therefore, it changes ``0...1111...0'' to ``0...1011...0''.
\end{itemize}
Note that converting each term does not corrupt the previously fixed terms. The same number of gates are needed for the $\pi^{-1}$ circuit. Accordingly, a total number of $8n+22$ elementary gates are required for the $\pi$ and $\pi^{-1}$ circuits.
The $\kappa_0$ circuit in Fig. \ref{fig:all-k}-b implements $(2^{k + 2} ,2^{k + 2}  + 2^{k - 1} )(2^{k + 2}  + 2^k ,2^{k + 2}  + 2^k  + 2^{k - 1} )$ with cost $24n-88$. Therefore, an arbitrary pair of 2-cycles $(a,b)(c,d)$ can be implemented by at most $32n-66$ elementary gates.

\begin{table*}
\caption{\small Direct synthesis of elementary cycles. Subscripts in the input cycles denote the orders in considering each term. The underlined bit in the $k$-$th$ position ($k$=$\lfloor \frac{n}{2}\rfloor$). The boldfaced ``1'' is Pivot in Algorithm 1. Numbers given for $\kappa_{0}$ terms are bit positions with ``1'' in the binary expansion.}
\centering{
\label{t:EC}
\begin{tabular}{|l|l|l|l|l|c|}
\hline
\multirow{2}{*}{Input Cycle(s)} &\multicolumn{3}{c|}{$\pi$ or $\pi^{-1}$ Circuit}&\multicolumn{2}{c|}{$\kappa_0$ Circuit}\\
 &Int. Terms&Max. Cost&Static Gates& Terms&Fig.\\

\hline
\hline
\multirow{2}{*}{$(a_1,b_2)$}		&(0...\textbf{1}\underline{0}...0)						&$n$ N				& -					&$(k + 1)$					 &\multirow{2}{*}{\ref{fig:all-k}-a}\\
					&(0...1\textbf{1}...0)				 				&$n$(1,$n$-1) C 		 & -					&$(k - 1)(k + 1)$				&				 \\
\hline
					&(0...\textbf{1}0\underline{0}0...0)						&$n$ N				 & -					&$(k + 2)$					 &\multirow{4}{*}{\ref{fig:all-k}-b}\\
	$(a_1,b_2)$			&(0...100\textbf{1}...0)							&$n$(1,$n$-1) C		  	 & -					&$(k + 2)(k - 1)$				&				 \\
	$(c_3,d_4)$			&(0...10\textbf{1}0...0)							&$n$(1,$n$-1) C			 & -					&$(k + 2)(k)$					&				 \\
					&(0...1\textbf{1}11...0)							&$n$(1,$n$-1) C			 &T$(k - 1 ,k ;k + 1 )$			&$(k + 2)(k)(k - 1)$ 				 &				\\
\hline
					&(0...0\underline{0}\textbf{1}...0)						&$n$ N				 & -					&$(k - 1)$					 &\multirow{3}{*}{\ref{fig:all-k}-c}\\
	$(a_1,b_2,c_3)$			&(0...\textbf{1}01...0)								&$n$(1,$n$-1) C 		 & -					&$(k+1)(k - 1)$					 &				 \\
					&(0...1\textbf{1}1...0)								&$n$(1,$n$-1) C		 	 & -					&$(k + 1)(k)(k - 1)$				&				 \\
\hline
					&(0...00\underline{0}0\textbf{1}...0)						&$n$ N				 & -					&$(k - 2)$ 					 &\multirow{6}{*}{\ref{fig:all-k}-d}\\
					&(0...000\textbf{1}1...0)							&$n$(1,$n$-1) C		 	 & -					&$(k - 1)(k - 2)$				&				 \\
	$(a_1,b_2,c_3)$			&(0...00\textbf{1}11...0)							&$n$(1,$n$-1) C			 & -					&$(k)(k - 1)(k - 2)$				 &				\\
	$(d_4,e_5,f_6)$			&(0...\textbf{1}0001...0)							&1 T, $n$-1 C			 & -					&$(k + 2)(k - 2)$				 &				 \\
					&(0...1\textbf{1}011...0)							&1 T, $n$-1 C			 &T$(k - 1 ,k + 2 ;k + 1 )$	 	&$(k + 2)(k - 1)(k - 2)$			 &				\\
					&(0...1\textbf{1}111...0)							&1 T, $n$-1 C			 &T$(k ,k + 2 ;k + 1 )$			&$(k + 2)(k)(k - 1)(k - 2)$ 			 &				\\

\hline
\multirow{4}{*}{$(a_1,b_2,c_3,d_4)$}	&(0...\textbf{1}\underline{0}00...0)						 &$n$ N				& -					&$(k+2)$					 &\multirow{4}{*}{\ref{fig:all-k}-e}\\
					&(0...100\textbf{1}...0)							&$n$(1,$n$-1) C 		 & -					&$(k - 1)(k + 2)$				&					 \\
					&(0...10\textbf{1}0...0)							&$n$(1,$n$-1) C			 & -					&$(k)(k + 2)$					&					 \\
					&(0...1\textbf{1}11...0)							&$n$(1,$n$-1) C			 & T$(k - 1 ,k ;k + 1 )$			&$(k - 1)(k)(k + 2)$				 &					\\
\hline
					&(0...\textbf{1}0\underline{0}00...0)						&$n$ N				 & -					&$(k + 2)$					 &\multirow{6}{*}{\ref{fig:all-k}-f}\\
					&(0...100\textbf{1}0...0)							&$n$(1,$n$-1) C 		 & -					&$(k + 2)(k - 1)$				&			 		 \\
	$(a_1,b_2,c_3,d_4)$		&(0...10\textbf{1}00...0)							&$n$(1,$n$-1) C			 & -					&$(k + 2)(k)$					 &					 \\
	$(e_5,f_6)$			&(0...1\textbf{1}110...0)							&$n$(1,$n$-1) C			 &T$(k - 1 ,k ;k + 1 )$ 	 		&$(k + 2)(k)(k - 1)$				 &					\\
					&(0...1011\textbf{1}...0)							&$n$(1,$n$-1) C			 & -					&$(k + 2) (k) (k - 1)(k - 2)$			 &					 \\
					&(0...1\textbf{1}011...0)							&1 T, $n$-1 C			 &T$(k - 2 ,k' ;k + 1 )$			&$(k + 2) (k - 1) (k - 2)$			 &					\\

\hline
					&(0...\textbf{1}0\underline{0}00...0)						&$n$ N				 & -					&$(k + 2)$					 &\multirow{8}{*}{\ref{fig:all-k}-g}\\
					&(0...1000\textbf{1}...0)							&$n$(1,$n$-1) C			 & -					&$(k + 2)(k - 2)$				&					 \\
					&(0...100\textbf{1}0...0)							&$n$(1,$n$-1) C			 & -					&$(k + 2)(k - 1)$				&					 \\
	$(a_1,b_2,c_3,d_4)$		&(0...10\textbf{1}11...0)							&$n$(1,$n$-1) C			 & - 					&$(k + 2)(k - 1)(k - 2)$			 &					\\
	$(e_5,f_6,g_7,h_8)$		&(0...10\textbf{1}00...0)							&$n$(1,$n$-1) C			 & T$(k - 2 ,k - 1 ;k )$	 		&$(k + 2) (k)$					 &					\\
					&(0...1\textbf{1}101...0)							&1 T, $n$-1 C			 & T$(k - 2 ,k ;k + 1 )$			&$(k + 2)(k) (k - 2)$				 &					\\
					&(0...1\textbf{1}110...0)							&1 T, $n$-1 C			 & T$(k - 1 ,k ;k + 1 )$			&$(k + 2) (k) (k - 1)$				 &					\\
					&(0...1\textbf{1}111...0)							&$n$(1,$n$-1) C			 & T$(k - 2 ,k - 1 ,k ;k + 1 )$		&$(k + 2) (k)  (k - 1) (k - 2)$			&					 \\

\hline
					&(0...\textbf{1}0\underline{0}00...0)						&$n$ N				 & -					&$(k + 2)$					 &\multirow{5}{*}{\ref{fig:all-k}-h}\\
					&(0...1000\textbf{1}...0)							&$n$(1,$n$-1) C			 & -					&$(k + 2)(k - 2)$				&					 \\
	$(a_1,b_4,c_2,d_3,e_5)$		&(0...100\textbf{1}0...0)							 &$n$(1,$n$-1) C			& -					&$(k + 2)(k - 1)$				 &					 \\
					&(0...1\textbf{1}011...0)							&$n$(1,$n$-1) C			 &T$(k - 2 ,k - 1 ;k + 1 )$ 		&$(k + 2) (k - 1)(k - 2)$			 &					\\
					&(0...10\textbf{1}11...0)							&$n$(1,$n$-1) C			 & -					&$(k + 2)(k)(k - 1)(k - 2)$			&					 \\
\hline
					&(0...\textbf{1}0\underline{0}000...0)						&$n$ N				 & -					&$(k + 2)$					 &\multirow{10}{*}{\ref{fig:all-k}-i}\\
					&(0...10000\textbf{1}...0)							&$n$(1,$n$-1) C			 & -					&$(k + 2)(k - 3)$				&					 \\
					&(0...1000\textbf{1}0...0)							&$n$(1,$n$-1) C			 & 					&$(k + 2)(k - 2)$				&					 \\
					&(0...100\textbf{1}11...0)							&$n$(1,$n$-1) C			 & T$(k - 3 ,k - 2 ;k - 1 )$ 		&$(k + 2)(k - 2)(k - 3)$			 &					\\
	$(a_1,b_4,c_2,d_3,e_{10})$		&(0...10\textbf{1}000...0)						 &$n$(1,$n$-1) C			& -					&$(k + 2)(k)$					 &					 \\
	$(f_5,g_8,h_6,i_7,j_9)$		&(0...1\textbf{1}1001...0)							&1 T, $n$-1 C			& T$(k - 3 ,k ;k + 1 )$						&$(k + 2)(k)(k - 3)$ 				 &					 \\
					&(0...1\textbf{1}1010...0)							&1 T, $n$-1 C			 & T$(k - 2 ,k ;k + 1 )$			&$(k + 2)(k)(k - 2)$				 &					\\
					&(0...1\textbf{1}1011...0)							&$n$(1,$n$-1) C			 &T$(k - 2 ,k - 1 ,k ;k + 1 )$		&$(k + 2)(k)(k - 2)(k - 3)$			 &					\\
					&(0...101\textbf{1}11...0)							&$n$(1,$n$-1) C			 & -					&$(k + 2)(k)(k - 1)(k - 2)(k - 3) $		 &					 \\
					&(0...1\textbf{1}0111...0)							&1 T, $n$-1 C			 & T$(k - 1 ,k' ;k + 1 )$		&$(k + 2)(k - 1)(k - 2)(k - 3) $		 &					\\

\hline

\end{tabular}
}
\end{table*}


\begin{figure*}
\centerline{\includegraphics[width=5in]{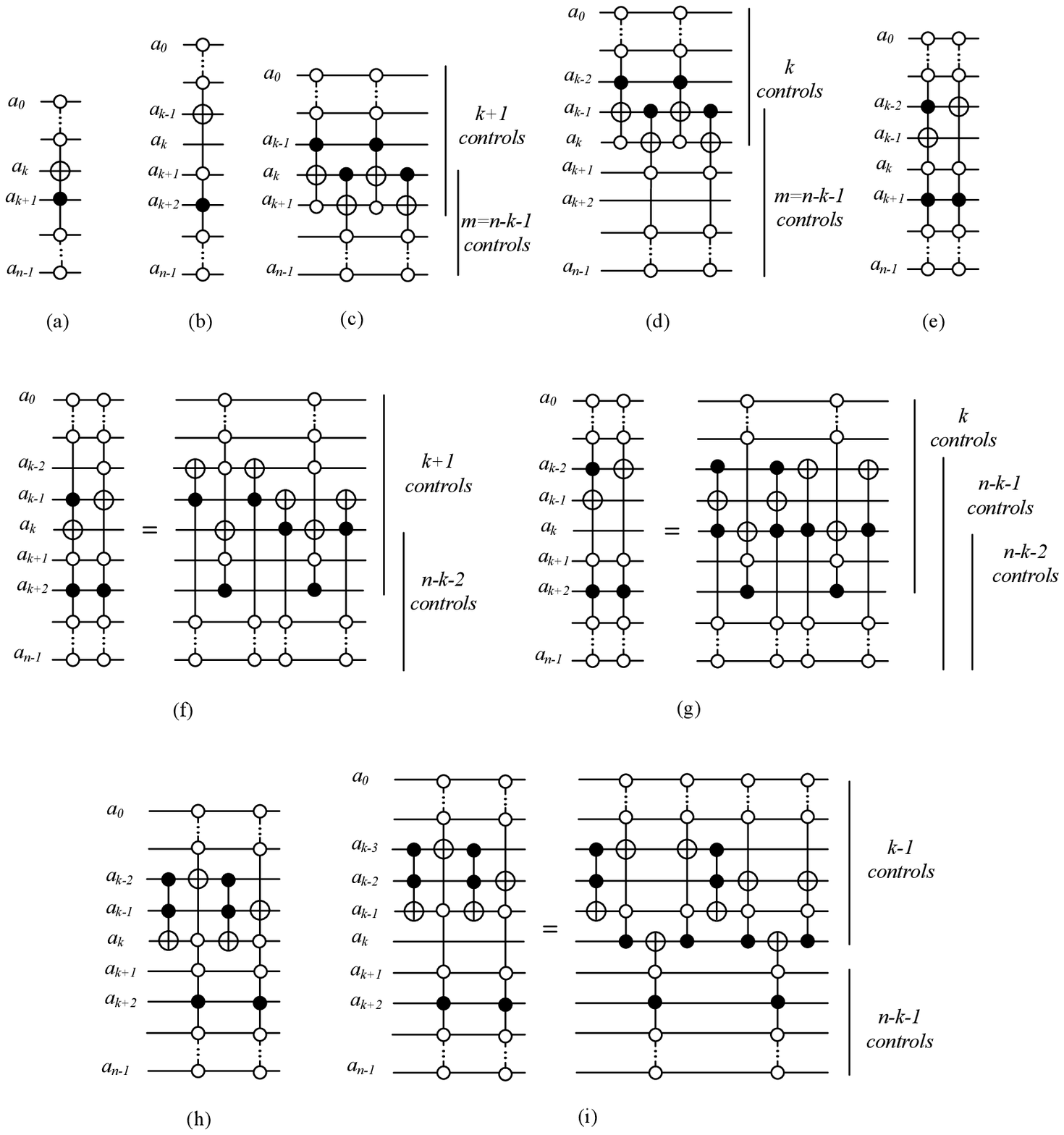}}
\caption{\small The $\kappa_0$ circuit structures for different elementary cycles. The circuit structures for cycles (2,2), (3), (3,3), (4,2), (4,4), (5), and (5,5) are similar to those proposed in \cite{saeedi10}. The new circuits for (2) and (4) besides the application of negative controls and the revised terms in the $\kappa_0$ circuits improve quantum cost and interaction cost.}
\label{fig:all-k}
\end{figure*}

Following the above discussion for the (2,2)-synthesis method, details for the synthesis of other elementary cycles are given in Table \ref{t:EC}. In this table, subscripts in column \emph{Input Cycle(s)} denote orders in considering each term. Intermediate terms are represented by binary expansions with LSB on the right and the underlined bit in the $k$-$th$ position ($k$=$\lfloor \frac{n}{2}\rfloor$). The boldfaced ``1'' is \emph{Pivot} in Algorithm \ref{alg:one} for each term. The parenthesized pairs in column \emph{Max. Cost} represent CNOT count with negative and positive controls, respectively. The numbers given in column \emph{Terms} for the $\kappa_0$ circuit are bit positions with value ``1'' in binary representation. Table \ref{t:tableQC} reports the resulting quantum cost of each elementary cycle.
As can be seen, the total number of elementary gates is improved by a linear factor in most cases.
Considering the worst-case cost of $3(n+\rm{CCL})$ for each gate in the $\pi$ and $\pi^{-1}$ circuits in the LNN architecture and
$6n-12$ elementary gates (i.e., two chains of $n-2$ SWAP gates) for the $\kappa_0$ circuits leads to the results given in \emph{Total Cost (LNN)} column in Table \ref{t:tableQC}.
\begin{table*}%
\centering{
\caption{\small Worst-case costs for elementary cycles.}
\label{t:tableQC}
\begin{tabular}{|c||clllll||ll|}
\hline

&\multicolumn{6}{|c||}{The Proposed Method}&\multicolumn{2}{|c|}{\cite{saeedi10}}\\
EC& Length & $\kappa_0$ & $\pi$, $\pi^{-1}$ & Total Cost&Cost/Length &Total Cost (LNN) &Total Cost&Cost/Length\\

\hline
\hline
(2)&2&  24$n$-64     & 2$n$+2   & 28$n$-60  &  14$n$-30  	& 145$n^2$-666$n$+772   &34$n$-30	 &17$n$-15   \\
\hline
(2,2)&4&  24$n$-88   & 4$n$+11  & 32$n$-66  &  8$n$-16.5 	& 147$n^2$-791$n$+1100  &34$n$-64	 &8.5$n$-16   \\
\hline
(3)&3&    24$n$-88   & 3$n$+4   & 30$n$-80  &  10$n$-26.7	& 146$n^2$-804$n$+1068  &32$n$-82	 &10.7$n$-27.3   \\
\hline
(3,3)&6&  24$n$-112  & 6$n$+26  & 36$n$-60  &  6$n$-10   	& 149$n^2$-907$n$+1474  &38$n$-46     &6.3$n$-15.3   \\
\hline
(4)&4&  48$n$-152    & 4$n$+11  & 56$n$-130 &  14$n$-32.5	& 291$n^2$-1463$n$+1868 &50$n$-84     &12.5$n$-21   \\
\hline
(4,2)&6&  36$n$-204  & 6$n$+14  & 48$n$-176 & 8$n$-29.4  	& 221$n^2$-1615$n$+2483 &50$n$-122    &8.3$n$-20.3    \\
\hline
(4,4)&8&  36$n$-204  & 8$n$+46  & 52$n$-112 &  6.5$n$-14 	& 223$n^2$-1573$n$+2678 &56$n$-126    &7$n$-15.7   \\
\hline
(5)&5&    48$n$-166  & 5$n$+13  & 58$n$-140 &  11.6$n$-28	& 292$n^2$-1537$n$+2057 &60$n$-130    &12$n$-26   \\
\hline
(5,5)&10& 36$n$-204  & 10$n$+57 & 56$n$-90  &  5.6$n$-9  	& 225$n^2$-319$n$+2790  &64$n$-54     &6.4$n$-5.4   \\
\hline
\end{tabular}
}
\end{table*}%

\subsection{Worst-Case Analysis}\label{subsubsec:WC}

In this section, an upper bound on the number of gates in the proposed cycle-based method is calculated.
To achieve this, let all terms of a truth table be involved in the input cycles to have a cycle with the maximum length $2^n$ for an $n$-input/$n$-output function.
To convert a cycle with length$>$5 to a set of elementary
cycles, we may have some repeated terms in non-disjoint cycles. As such, $2^n$+$a_r$ shows the maximum number of terms where $a_r$ is the maximum number of repeated terms and can be estimated as $a_r=\frac{a_{r-1}+4}{5}, a_0=\frac{2^n}{5}$ which results in $a_r=\frac{2^n}{5} + \sum\nolimits_{i = 2}^{\log_5(\frac{2^n-5}{4})} {\frac{{2^n  + 5^{i}  - 5}}{{5^{i} }}} =  2^{n-2} +  \log_5(\frac{2^n-5}{4}) - \frac{9}{4}$. Theorem \ref{th:wc} discusses the maximum number of elementary gates in our approach.

\begin{theorem}\label{th:wc}
The maximum number of elementary gates for any permutation in the proposed approach is $9.4 n2^n - 18.8 2^n  + o(n^2)$ and $42.4 n^2 2^n  + o(n^3)$ without and with considering interaction cost, respectively.
\end{theorem}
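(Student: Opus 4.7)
The plan is to bound the maximum cost of the algorithm of Section~\ref{sec:Mk-cycle} by first bounding the total \emph{length} of elementary cycles that must actually be synthesized, and then multiplying by the best per-element cost available in Table~\ref{t:tableQC}. I would separate the two estimates of the theorem (with and without interaction) only at the last step, since they share the same combinatorial input.

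First, I would invoke the fact (recalled in Section~\ref{sec:pre}) that every permutation factors uniquely into disjoint cycles; in the worst case all $2^n$ truth-table terms participate, so the total length of these cycles is at most $2^n$. Each cycle of length greater than $5$ is then split into a sequence of elementary cycles (lengths $\le 5$) that share \emph{one} common element per split; this overlap is exactly what produces the repetition counter $a_r$ already discussed just before the statement, and solving the recurrence gives $a_r = 2^{n-2} + \log_5\!\bigl((2^n-5)/4\bigr) - 9/4 = 2^{n-2} + o(2^n)$. Hence the total number of elements fed into elementary-cycle synthesis is at most $2^n + a_r = \tfrac{5}{4}\,2^n + o(2^n)$.

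Next, I would examine the \emph{Cost/Length} column of Table~\ref{t:tableQC} to pick the cheapest elementary configuration. The minimum ratio, $5.6n-9$, is attained by $(5,5)$; whenever two extracted $5$-cycles are disjoint one can pair them, while the residual unpaired $5$-cycles and the short residues (lengths $2,3,4$) are handled by the remaining rows of Table~\ref{t:tableQC}. I would argue that these residues contribute only $o(n\,2^n)$ gates, so multiplying the dominant per-element cost by $2^n + a_r$ and collecting constants yields the claimed $9.4\,n\,2^n - 18.8\,2^n + o(n^2)$. The LNN bound is obtained identically, but multiplying instead by the per-element figure derived from the \emph{Total Cost (LNN)} column; this figure is $\Theta(n^2)$ because of the $3(n+\mathrm{CCL})$ overhead per gate in $\pi,\pi^{-1}$ and the two $(n-2)$-long SWAP chains of cost $6n-12$ inside each $\kappa_0$, and bookkeeping the constants yields $42.4\,n^2 2^n + o(n^3)$.

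The main obstacle I expect is justifying that the ``pair as many $(5,5)$ as possible'' strategy really is worst-case optimal and that the unpaired residues truly land in the sub-leading term. Specifically, after a long cycle is split into overlapping $5$-cycles, consecutive $5$-cycles share an element, so only a fraction of them may be grouped into disjoint $(5,5)$ pairs; I must verify that the leftover single $5$-cycles (cost $11.6n-28$ per element, nearly twice the $(5,5)$ rate) occur at a density that contributes only to the $o(2^n)$ term in the element count (or, equivalently, that the arithmetic of the $a_r$ recurrence already absorbs them). Once that accounting is in place, obtaining both bounds of the theorem reduces to routine substitution of the entries of Table~\ref{t:tableQC}.
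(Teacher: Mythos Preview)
Your proposal has a genuine gap, and the arithmetic already signals it: multiplying your element count $2^n+a_r=\tfrac{5}{4}2^n+o(2^n)$ by the $(5,5)$ rate $5.6n-9$ yields $7n\,2^n-11.25\cdot 2^n$, not $9.4n\,2^n-18.8\cdot 2^n$. The same mismatch appears in the LNN case (your product would give about $28.1\,n^2 2^n$, not $42.4\,n^2 2^n$).

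The conceptual error is that you selected the \emph{minimum} Cost/Length entry. Since the theorem asserts an upper bound on the cost over all permutations, you must account for the \emph{most expensive} elementary configuration the decomposition can force, not the cheapest one you could hope to use. The paper's argument separates the two populations of terms exactly for this reason: the $2^n$ original terms are charged at the $(2,2)$ rate $8n-16.5$ per element, because among all paired elementary cycles this is the largest Cost/Length (single odd cycles $(2),(3),(4),(5)$ occur at most once each and contribute only $O(n)$); the $a_r\approx 2^{n-2}$ repeated terms, which arise specifically from the length-$5$ decomposition of long cycles, are charged at the $(5,5)$ rate $5.6n-9$. Adding $8n\cdot 2^n$ and $1.4n\cdot 2^n$ gives the $9.4n\,2^n$ leading term, and the analogous split with the LNN column ($\approx 36.75n^2$ and $\approx 22.5n^2$ per element) gives $42.4\,n^2 2^n$. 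Your last paragraph in fact anticipates the problem---consecutive $5$-cycles share elements, so you cannot freely pair everything into $(5,5)$---but the resolution is not that the leftover lands in a sub-leading term; it is that the worst-case bound must use the $(2,2)$ rate for the bulk of the $2^n$ elements.
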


\begin{proof}
In Table \ref{t:tableQC}, the column \emph{Cost/Length} determines a cost needed for setting a term in each elementary cycle. To calculate the maximum cost, suppose at most one 3-cycle, one 4-cycle and one 5-cycle are included which can be synthesized by the related synthesis algorithms. All other terms are supposed to be synthesized as pairs of 2-cycles. Note that the number of elementary gates for fixing terms in a pair of 2-cycles is greater than any other pairs (See Table \ref{t:tableQC}). The repeated terms in non-disjoint 5-cycles are synthesized by the (5,5)-cycle synthesis method.

Accordingly we will have, $3 \times$\emph{Cost/Length${}_3$} + $4 \times$\emph{Cost/Length${}_4$} + $5 \times$\emph{Cost/Length${}_5$} + $(2^n-12) \times$\emph{Cost/Length${}_{2,2}$} + $a_r\times$\emph{Cost/Length${}_{5,5}$} which leads to $9.4 n2^n - 18.8 \times2^n + 2.8 n^2 + 43.5 n -152.1$ elementary gates in the worst-case with arbitrary interaction and $42.4 n^2 2^n  + 11.3 n^3 + 288.2 n^2$ with limited interaction.
\end{proof}

The worst-case quantum cost of \cite{saeedi10} is $51n^2 2^n $ for architectures with limited interaction.

\section{Synthesis with Parallel Structure}\label{subsec:parallel}
In this section, a parallel circuit structure is introduced for reversible logic that can be used to considerably reduce circuit depth of reversible circuits in most cases. The general idea is to copy input lines into $k$ sets of zero-initialized ancillae, divide the input specification into $k$ sets of disjoint cycles and then synthesize each set independently by using the prepared ancillae. The final results can be recovered by several CNOTs. It should be mentioned that adding ancillae has been previously used for quantum cost reduction in the synthesis and optimization methods \cite{WilleDAC09,maslov11}. In the proposed method, ancillae are used for the propose of depth reduction without considerable overhead on quantum cost, thanks to the specific form of input representation, i.e., cycle. Note that each cycle can be synthesized by a different synthesis method.

\textbf{Input Storing Block.} Copying an arbitrary quantum state is not possible in general but a Boolean value can be copied into a zero-initialized
ancilla by a CNOT gate conditioned on the main line and targeted on the ancilla. For $m$ $n$-line zero-initialized ancillae, the input storing block
includes $m$$n$ CNOT gates with constant depth $m$. Fig. \ref{Pre}-a shows the input storing block for a circuit with $n$ main lines and $m$ $n$-line ancillae. The interaction cost can be calculated as $n(n-1)(1+2+...+m-1)=(1/2)nm(n-1)(m-1)$. Fig. \ref{Pre}-b illustrates another circuit structure with improved interaction cost, $mn(n-1)$. Circuit depth in Fig. \ref{Pre}-a can be improved from linear factor to logarithmic factor $O(\log m)$ \cite{moore09} as shown in Fig. \ref{Pre}-c. Thus, interaction cost can be calculated as $n(n-1)\sum\nolimits_{i = 0}^{\log _2 m - 1} {2^{2i} }=(1/2)n(n-1)(m^{2}-2)$.

\begin{figure*}
\centering
\centerline{\includegraphics[width=4.3in]{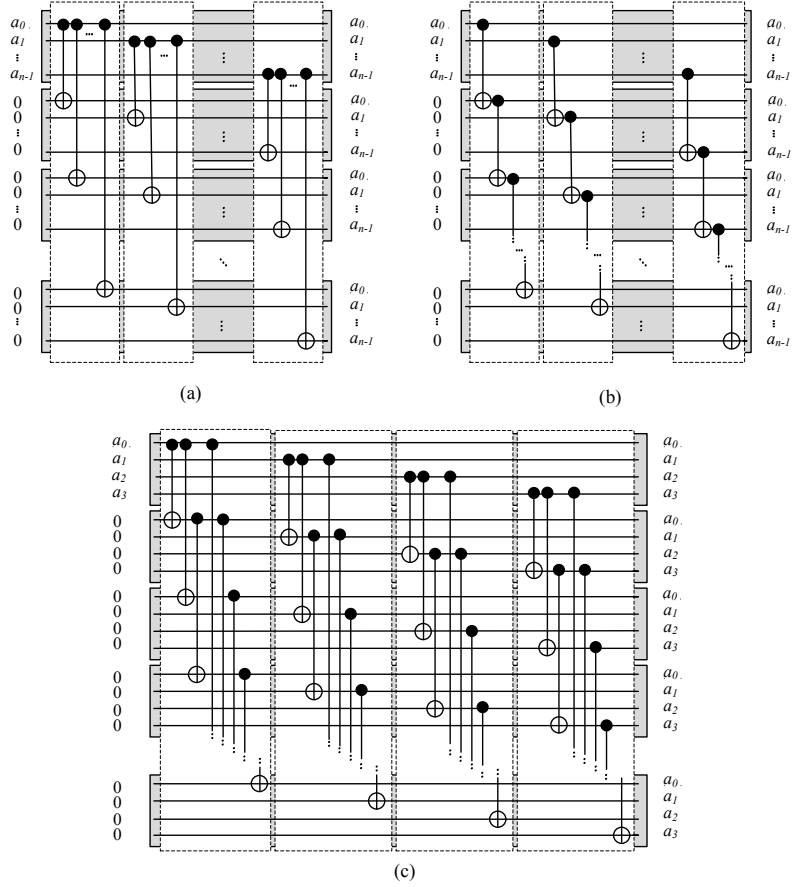}}
\caption{\small(a) The input storing block with linear depth. (b) An alternative circuit structure with improved interaction cost and linear depth. (c) A logarithmic-depth circuit structure.}
\label{Pre}
\end{figure*}

\begin{figure*}
\centering
\centerline{\includegraphics[width=4.3in]{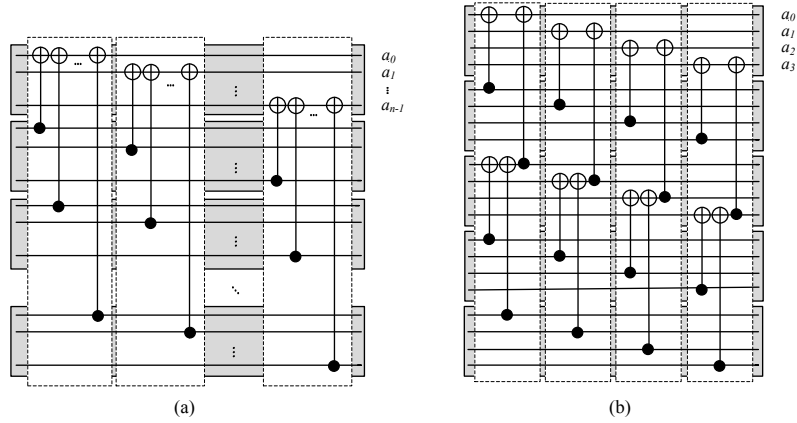}}
\caption{\small(a) The output restoring block with linear depth. (b) The output restoring block with logarithmic depth for four main lines and four 4-line
ancillae.}
\label{Term}
\end{figure*}

\textbf{Output Restoring Block.} Since each subcircuit implements a set of disjoint cycles, for a given input combination, only one circuit (active) produces the results and the outputs of other subcircuits (inactive) are the same as the inputs. The number of inactive subcircuits is equal to the number of $n$-line ancillae registers, which is even. As such, XORing (by CNOT) the outputs of all subcircuits on the main lines cancels inputs and restores correct outputs at the main lines. Overall, for $m$ $n$-line ancillae and $m$+1 sets of disjoint cycles,
$m$$n$ CNOTs with depth $m$ are sufficient. Fig. \ref{Term}-a illustrates the output restoring block for $m$ $n$-line ancillae with interaction cost $nm(n-1)(m-1)$. CNOT-circuit with common target can be implemented with logarithmic depth \cite{moore09} as illustrated in Fig. \ref{Term}-b for $n$=4 and $m$=4. In this case, interaction cost is $n(n-1)
\sum\nolimits_{i = 0}^{\log _2 m - 1} {{{2^i  - 1} \mathord{\left/
 {\vphantom {{2^i  - 1} {2^{i + 1} }}} \right.
 \kern-\nulldelimiterspace} {2^{i + 1} }}}=(1/2)nm(n-1)(2m+\log_2m+2)$.

\begin{theorem}\label{th:par}
Consider a given specification $F$ on $n$ lines written as a set of disjoint cycles $C_1 C_2 ...C_m$ for an odd $m$. Assume that subcircuit $L_i$ implements $C_i$. The specification $F$ can be implemented with depth O($depth_{max}(L_i)$) in the presence of $m$ $n$-line ancillae.
\end{theorem}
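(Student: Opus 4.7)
The plan is to build a three-phase circuit and bound the depth of each phase. First, I would apply the logarithmic-depth input storing block of Fig.~\ref{Pre}-c to copy the input state held on the main register onto every one of the $m$ zero-initialized ancilla registers, contributing depth $O(\log m)$. Second, I would run the given subcircuits $L_1,\dots,L_m$ in parallel, placing $L_i$ on the $i$-th ancilla register; because these registers are pairwise disjoint, no two gates from different $L_i$'s share a qubit, and the depth of this middle block equals $\max_i depth(L_i)$. Third, I would apply the logarithmic-depth output restoring block of Fig.~\ref{Term}-b to XOR the contents of all ancillae bitwise onto the main register, contributing a further depth of $O(\log m)$.

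For correctness, I would invoke the disjointness of $C_1,\dots,C_m$: for any input $x$, at most one subcircuit $L_i$ is \emph{active} and sets its register to $F(x)$, while the remaining $m-1$ subcircuits leave their register equal to $x$, and the main register still holds $x$ after the storing block. The XOR tree in the restoring block then combines the one copy of $F(x)$ with the $m-1$ redundant copies of $x$ on the other ancillae and the copy on the main register; the parity hypothesis that $m$ is odd is exactly what the discussion preceding the theorem uses to guarantee that these $x$-terms cancel and $F(x)$ alone remains on the main register.

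Summing the depths of the three phases yields $O(\log m) + \max_i depth(L_i) + O(\log m) = O(\max_i depth(L_i))$. The $O(\log m)$ overhead is absorbed because $m$ is bounded by the number of disjoint cycles of a permutation on $n$ bits, so $\log m = O(n)$, while each nontrivial $L_i$ has depth at least of this order.

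I expect the main obstacle to be the precise bookkeeping of the XOR restoration rather than the depth counting: one has to verify, case by case on whether the input $x$ lies in some $C_i$ or is fixed by $F$, that the odd-$m$ parity makes the ancilla contributions cancel at the main register exactly as needed, and that this cancellation is compatible with the structure of the restoring block shown in Fig.~\ref{Term}-b.
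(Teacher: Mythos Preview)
Your three-phase outline and depth accounting match the paper's argument, but the correctness of the restoring step fails under your allocation of subcircuits. You place \emph{all} $m$ subcircuits on the $m$ ancilla registers and leave the main register idle during the middle phase. Then for an input $x$ lying in cycle $C_j$, after the middle phase the main register holds $x$, ancilla $j$ holds $F(x)$, and the remaining $m-1$ ancillae hold $x$. XORing all ancillae onto the main register yields
\[
x \;\oplus\; F(x)\;\oplus\;\underbrace{x\oplus\cdots\oplus x}_{m-1\text{ copies}} \;=\; x\oplus F(x),
\]
since $m-1$ is even; this is not $F(x)$. For a fixed point $x$ the situation is worse: all $m$ ancillae hold $x$, the main register holds $x$, and the XOR gives $x\oplus(\text{$m$ copies of }x)=0$ because $m$ is odd. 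So the parity hypothesis works \emph{against} you with this layout.

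The paper avoids this by running one of the $L_i$ on the main register and the remaining $m-1$ on ancillae (indeed its proof copies to only $m-1$ ancillae). With that allocation the number of inactive subcircuits among the ancillae is always even, and the same XOR cancels correctly in every case. Your depth bound survives unchanged once you make this one-line reallocation; only the register assignment in the middle phase needs to be fixed.
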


\begin{proof}
Copying the input lines to $m-1$ $n$-line zero-initialized ancillae replicates inputs at the ancillae. Disjoint cycles commute. Hence, each subcircuit can be implemented on one register independently.
The input storing/output restoring blocks have constant depth $m$. Therefore, circuit depth is dominated by the maximum depth of all subcircuits.
\end{proof}

\begin{figure*}
\small
\centerline{\includegraphics[width=4.8in]{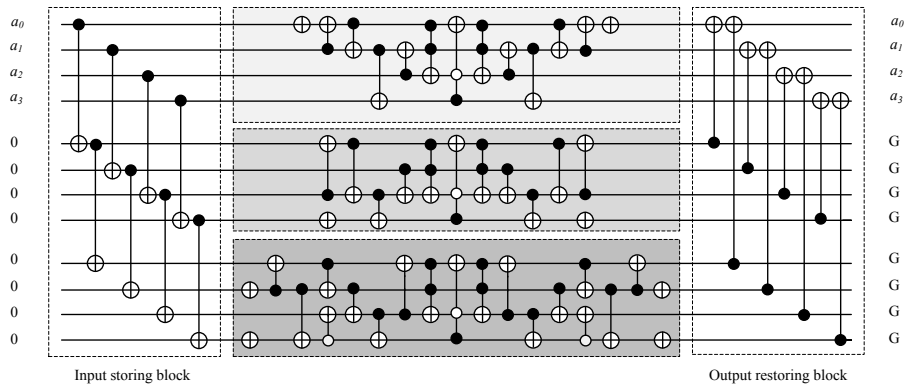}}
\caption{\small An example of the proposed parallel cycle-based structure for a 4-line function.}
\label{fig:exmp}
\end{figure*}

A given specification may contain a set of disjoint cycles with exponential lengths, i.e., $O(2^n)$. In such cases, circuit depth cannot be further improved by Theorem \ref{th:par}. However, as will be shown in Section \ref{sec:exper}, circuit depth can be reduced considerably even with a small number of $n$-line ancillae. To efficiently employ the result of Theorem \ref{th:par}, one needs to determine disjoint cycle sets.

\begin{example}\label{Ex:4}
\emph{Assume that the input cycles (1,3) (7,10) (0,4) (6,15) (2,8) (5,13) are
given for a circuit with 4 lines. All cycles are elementary and no decomposition is required. Let 2 4-line ancillae be available and each pair of 2-cycles be assigned to one set, i.e., (1,3) (7,10)  to set \#1, (0,4) (6,15) to set \#2 and (2,8) (5,13) to set \#3. Applying the input storing block provides the input data on the added zero-initialized ancillae. Now, the proposed method in Section \ref{sec:Mk-cycle} can be applied for each cycle pair which leads to three subcircuits. To combine the results, one needs to add the output restoring block. Accordingly, total depth is equal to the maximum depth of the synthesized subcircuits (i.e., 33) plus 4 (2 for each input storing/output restoring block). Fig. \ref{fig:exmp} illustrates the result.}

\end{example}


\textbf{Cycle Distribution.}
Consider $n$ elementary cycles and $m$ register sets, including the input register. The problem is to assign disjoint cycles into different registers such that the total depth of the circuit in each register is minimized and the depths of the registers are almost equal. To achieve this goal, we modeled the cycle distribution problem as the bin packing problem\footnote{Bin packing problem is a combinatorial NP-hard problem in computational complexity theory in which objects of different weights must be packed into a finite number of bins of capacity $W$ such that the number of used bins are minimized.
Given a bin of size $W$ and a list $w_1 ,...,w_n $ of sizes of the items, one should find an integer $B$ and a $B$-partition $S_1  \cup ... \cup S_B $ of $\{ 1,...,n\} $ such that $\sum\nolimits_{i \in S_k } {w_i  \le W} $ for all $k = 1,...,B$. A solution is optimal if it has minimal $B$.
} with a few exceptions. In our modeling, registers are bins and cycles are objects. Each cycle is decomposed into a set of elementary cycles and cost values in Table \ref{t:tableQC} are used as the weights of elementary cycles. If the input permutation is odd, the permutation in one bin should be odd. Many heuristic algorithms have been developed to solve different variants of the bin packing problem. Examples include first fit and best fit algorithms.

To solve the problem, a best fit algorithm is developed which sorts $c$ elementary cycles according to their maximum synthesis costs and
proceeds one cycle at a time. To distribute cycles, the first cycle is
selected and temporarily assigned to bin $i$ for $1\leq i\leq m$. Then, the total cost is calculated among all the bins and the cycle is permanently
assigned to the bin which results in the lowest total cost. In the case of a tie, the bins are selected in sequence. The algorithm
continues until all the cycles are assigned. Therefore, the total time complexity is $O(c\log c) + O(cm^2)$.
At the end, the algorithm
checks the permutation of each bin to make sure that at most one bin has
an odd permutation. Odd permutations need one ancilla in the NCT library
\cite{Shende03}. If more than one bin is found with an odd permutation (called odd bin), the
algorithm moves the smallest odd cycle of the odd bin with maximum
depth to the odd bin with the minimum depth. This can take $O(m)$ time. After the changes, the involved bins should have even
permutations. This process is continued until at most one bin with an
odd permutation exists --- this occurs when the input permutation is odd and at least $m$ even permutations exist to fill all the bins. Altogether, the whole process has a time complexity of $O(c\log c) + O(cm^2)$.
\begin{table*}
\centering{
\caption{\small Benchmark specifications before and after decomposition.}
\label{t:tablecycles}
\begin{tabular}{|l|c|c||c|c|c|c||c|c|c|c|c|c|c|c|}
\hline
\multirow{2}{*}{Benchmark}	&\multicolumn{9}{c|}{\# of Cycles}				 								&\multicolumn{4}{c|}{\multirow{2}{*}{Depth}}		 \\
\multirow{2}{*}{Function}	&\multicolumn{2}{c||}{Before DCM} &\multicolumn{4}{c||}{After DCM} &	\multicolumn{3}{c|}{After DCM \& DIST}			 &\multicolumn{4}{c|}{}	      				\\
				&EC & 	nEC & (2) & (3) & (4) & (5)		 & 	set1 & set2 & set3 				& circ1 & circ2 & circ3 & total		\\
\hline																							
\hline
hwb8				&48	&16	&36	&-	&28	&16			&26	&28	&26				&1923	&1995	&1953	&1999	\\					
\hline												
hwb9				&54	&38	&-	&54	&-	&76			&43	&43	&44				&4344	&4347	&3988	&4351	\\				
\hline												
hwb10				&228	&26	&152	&-	&-	&154			&101	&103	&102				&9929	&10058	&9898	&10062	\\				
\hline												
hwb11				&-	&186	&-	&186	&-	&372			&186	&186	&186				&23862	&23826	&23827	&23866	\\				
\hline												
nth\_prime7 			&-	&5	&1	&3	&1	&28			&19	&6	&8				&1519	&390	&734	&1523	\\				
\hline												
nth\_prime8 			&-	&1	&-	&1	&-	&62			&63	&-	&-				&5852	&-	&-	&5852	\\													
\hline												
nth\_prime9			&1	&3	&2	&-	&1	&125			&122	&4	&2				&13783	&393	&346	&13787	\\	 			
\hline												
nth\_prime10			&1	&3	&1	&2	&-	&253			&96	&85	&75				&12115	&10947	&9329	&12119	\\	 		 	
\hline												
nth\_prime11			&-	&6	&2	&-	&1	&507			&315	&36	&159				&46888	&5470	&23765	&46892	\\	 		 	
\hline
\end{tabular}
}
\end{table*}

\section{Experimental Results}\label{sec:exper}
The proposed cycle-based synthesis method for the LNN architecture and the suggested parallel structure for reversible logic synthesis were implemented in C++ and all of the experiments were performed on an Intel Pentium IV 2.5GHz computer with 4GB memory.
To evaluate the proposed synthesis method, some of the reversible benchmark functions from \cite{MaslovSite} were synthesized. The selection criteria for these benchmarks will be discussed later in this section and their specifications are given in Table \ref{t:tablecycles} \emph{before} and \emph{after} decomposition. The decomposition approach of \cite{saeedi10} is used in our method to decompose the input cycles into the proposed elementary cycles. The number of elementary cycles (EC) and non-elementary cycles (nEC) of each benchmark is reported is this table. After decomposition, all cycles are elementary with length$<$6.
Note that \cite{saeedi10} proposes the best prior synthesis algorithm for medium-size \emph{hwbN} and \emph{N-th prime} functions if no ancilla is available \cite{MaslovSite}. While \emph{hwbN} functions can be implemented with a polynomial cost $O(n \log^2 n)$ if a logarithmic number of garbage bits $\lceil \log n\rceil + 1$ is available \cite{MaslovSite}, the proposed approach is more general and can be applied to many reversible functions.

To evaluate the proposed parallel structure, the cycle-based algorithm of Section \ref{sec:Mk-cycle} was used for synthesizing each subset. Since the number of signals is limited in the current quantum technologies, the minimum number of ancillae (2 $n$-line registers) was used. Therefore, the number of input cycles should be $>$3 to have at least one cycle in each subset.
In our experiments, the results of \cite{MaslovTCAD08} were used for decomposing multiple-control Toffoli gates and calculating quantum cost for the gates with negative controls. Besides, the two-qubit cost model of \cite{maslov11} is used for evaluating the results.
A naive SWAP insertion method and the method of \cite{saeedi11QIP} were used to evaluate the results for the LNN architecture. For the naive method, \emph{move} and \emph{delete} rules were applied on the synthesized circuits to remove redundant gates.
To estimate circuit depth, the greedy level compaction algorithm of \cite{MaslovTCAD08} was implemented without applying the templates.
\begin{table}
\caption{\small Comparison of the proposed approach and prior best results. \#A is the number of ancillae. R and P are used for regular and parallel structures, respectively. The resulted circuits are available at
http://ceit.aut.ac.ir/\~{}arabzadeh/results/, and may be viewed with RCViewer+ \cite{RCV}.}
\label{t:tableRes1}
\begin{center}
\begin{tabular}{|l|c|c|c|c|c|c|c|c|c|c|c|c|}
\hline
Benchmark			&\multirow{2}{*}{$n$}	&\multicolumn{5}{c|}{The Proposed Method}		& \multicolumn{3}{c|}{\cite{saeedi10}}	& \multicolumn{3}{c|}{Improvement (\%)} 	\\
Function			&			&R/P 	&\# A	&QC	& 2-qubit	&Depth 		&QC	& 2-qubit	& Depth  		 &QC		& 2-qubit	&Depth  	\\
\hline
\hline
\multirow{2}{*}{hwb8}		&\multirow{2}{*}{8}	&R	&-	&6686	&4468	&5622	 &\multirow{2}{*}{6940}		&\multirow{2}{*}{5348}	&\multirow{2}{*}{5442}			 &3.6	 &16.4	&-3.3	 \\
				&			&P	&16	&6964	&4730	&1999	&				&			&			 		&-0.3	&11.5	&63.2	\\
\hline
\multirow{2}{*}{hwb9}		&\multirow{2}{*}{9}	&R	&-	&14474	&10382	&12054	 &\multirow{2}{*}{16173}	&\multirow{2}{*}{12479}	&\multirow{2}{*}{12472}			 &10.5	 &16.8	&3.3	 \\
				&			&P	&18	&15262	&10764	&4351	&				&			&			 		&5.6	&13.7	&65.1	\\
\hline			
\multirow{2}{*}{hwb10}		&\multirow{2}{*}{10}	&R	&-	&35298	&23584	&29751	 &\multirow{2}{*}{35618}	&\multirow{2}{*}{25453}	&\multirow{2}{*}{27812}			 &0.8	 &7.3	&-6.9	 \\
				&			&P	&20	&35890	&23874	&10062	&				&			&			 		&-0.7	&6.2	&63.8	\\
\hline														
\multirow{2}{*}{hwb11}		&\multirow{2}{*}{11}	&R   	&-	&86864	&65260	&71418	 &\multirow{2}{*}{90745}	&\multirow{2}{*}{71175}	 &\multirow{2}{*}{69763}	 		&4.2	&8.3	&-2.3	 \\
	 			&			&P	&22	&87234	&65442	&23866	&				&			&			 		&3.8	&8.0	&65.7 	\\
\hline
\multirow{2}{*}{nth\_prime7} 	&\multirow{2}{*}{7}	&R	&-	&2888	&2296	&2473	 &\multirow{2}{*}{3172}		&\multirow{2}{*}{2841}	&\multirow{2}{*}{2514}			 &8.9	&19.1	&1.6	 \\
				&			&P	&14	&3100	&2398	&1523	&				&			&			 		&2.2	&15.5	&39.4   \\
\hline
\multirow{2}{*}{nth\_prime8} 	&\multirow{2}{*}{8}	&R	&-	&7016	&5624	&5852	 &\multirow{2}{*}{7618}		&\multirow{2}{*}{6622}	&\multirow{2}{*}{5793}			 &7.9 	&15.0	&-1.0   \\
				&			&P	&-	&-	&-	&-	&				&			&					&-	&-	&-	 \\											
\hline														
\multirow{2}{*}{nth\_prime9}	&\multirow{2}{*}{9}	&R   	&-	&16820	&11907	&14285	 &\multirow{2}{*}{17975}	&\multirow{2}{*}{14076}	 &\multirow{2}{*}{13941}	 		&6.4	&15.4	&-2.4	 \\
	 			&			&P	&18	&17507	&12053	&13787	&				&			&			 		&2.6	&14.3	&1.1  	\\
\hline
\multirow{2}{*}{nth\_prime10}	&\multirow{2}{*}{10}	&R	&-	&38843	&27743	&31924	 &\multirow{2}{*}{40301}	&\multirow{2}{*}{31841}	 &\multirow{2}{*}{31254}	 		&3.6	&12.8	&-2.1   \\
	 		 	&			&P	&20	&39317	&27933	&12119	&				&			&			 		&2.4	&12.2	&61.2   \\
\hline														
\multirow{2}{*}{nth\_prime11}	&\multirow{2}{*}{11}	&R   	&-	&92863	&67401	&75668	 &\multirow{2}{*}{95433}	&\multirow{2}{*}{75474}	 &\multirow{2}{*}{72934}	 		&2.6	&10.6	&-3.7	 \\
	 			&			&P	&22	&93389	&67677	&46892	&				&			&			 		&2.1	&10.3	&35.7 	\\
\hline
\hline
\multicolumn{2}{|l}{\multirow{2}{*}{\textbf{Average}}} 	& \multicolumn{8}{c|}{} &\textbf{5.4}	&\textbf{13.6}  	&\textbf{-1.9}\\
\multicolumn{2}{|l}{}		 			& \multicolumn{8}{c|}{} &\textbf{2.2} 	&\textbf{11.5}   	&\textbf{49.4}  \\
\hline
\end{tabular}
\end{center}
\end{table}
\begin{table*}
\caption{\small Comparison of the proposed approach and the one in \cite{saeedi10} with the nearest neighbor limitation. The improvment column compares the results after applying \cite{saeedi11QIP} on both methods. The resulted circuits are available at
http://ceit.aut.ac.ir/\~{}arabzadeh/results/, and may be viewed with RCViewer+ \cite{RCV}.}
\label{t:tableRes4}
\centering{
\begin{tabular}{|l|c|c|c|c|c|c|c|c|c|c|c|c|c|}
\hline
\multirow{2}{*}{Benchmark}	&\multirow{3}{*}{$n$}	&\multicolumn{6}{c|}{The Proposed Method}	 																										 &\multicolumn{2}{c|}{\multirow{2}{*}{\cite{saeedi10}+\cite{saeedi11QIP}}}	& \multicolumn{2}{c|}{\multirow{2}{*}{Improvement (\%)}}  	\\
\multirow{2}{*}{Function}	&			&\multirow{2}{*}{R/P} 				&\multirow{2}{*}{\# A}	 			&\multicolumn{2}{c|}{+Naive}  				 &\multicolumn{2}{c|}{+\cite{saeedi11QIP}}  			&\multicolumn{2}{c|}{}								 &\multicolumn{2}{c|}{}						\\
				&			&						&						&QC		&Depth					&QC	&Depth								 &QC	 &Depth	 								&QC	&Depth 							\\
\hline
\hline
\multirow{2}{*}{hwb8}		&\multirow{2}{*}{8}	&R	&-	&36684		&32313		&31553	&20940		&\multirow{2}{*}{36732}	&\multirow{2}{*}{22720}		&14.0	 &7.8		\\		
				&			&P	&16	&46788		&14758		&36045	&9248		&			 &				&1.8	&59.2		\\		
\hline																			
\multirow{2}{*}{hwb9}		&\multirow{2}{*}{9}	&R	&-	&87310		&74676		&77860	&46958		&\multirow{2}{*}{91805}	&\multirow{2}{*}{51181}		&15.1	 &8.2		\\		
				&			&P	&18	&100228		&31810		&87389	&19597		&			 &				&4.8	&61.7		\\		
\hline																			
\multirow{2}{*}{hwb10}		&\multirow{2}{*}{10}	&R	&-	&279496		&248524		&202903	&112623		&\multirow{2}{*}{228240}&\multirow{2}{*}{117893	 }	 &11.1	&4.4		\\	
				&			&P	&20	&291014		&89021		&212616	&41479		&			 &				&6.8	&64.8		\\		
\hline																
\multirow{2}{*}{hwb11}	&\multirow{2}{*}{11}		&R   	&-	&682182		&605294		&562817	&297986		&\multirow{2}{*}{611843} &\multirow{2}{*}{307114}	 &8.0	&2.9	 	\\
	 			&			&P	&22	&685944		&205472		&569876	&104372		&			 &				&6.8	&66.0 		\\
\hline																					
\multirow{2}{*}{nth\_prime7} 	&\multirow{2}{*}{7}	&R	&-	&12264		&10649		&10922	&9799		&\multirow{2}{*}{15356}	 &\multirow{2}{*}{10130}	 &28.8	&3.2		 \\		
				&			&P	&14	&15106		&7734		&15897	&6930		&			 &				&-3.5	&31.5	   	\\		
\hline	
\multirow{2}{*}{nth\_prime8} 	&\multirow{2}{*}{8}	&R	&-	&35976		&29975		&30796	&26920		&\multirow{2}{*}{42059}	 &\multirow{2}{*}{24574}	 &26.7	&-9.5	  	 \\
				&			&P	&-	&-		&-		&-	&-		&			&				&-	&-		\\											
\hline														
\multirow{2}{*}{nth\_prime9}	&\multirow{2}{*}{9}	&R   	&-	&91984		&76910		&90511	&54457		&\multirow{2}{*}{99003}	 &\multirow{2}{*}{55737}	 &8.5	&2.2		 \\
	 			&			&P	&18	&98686		&76020		&95362	&54850		&			 &				&3.6	&1.5	 	\\
\hline
\multirow{2}{*}{nth\_prime10}	&\multirow{2}{*}{10}	&R	&-	&241538		&199996		&222865	&124122		&\multirow{2}{*}{248901}&\multirow{2}{*}{137091}	 &10.4	&9.4	  	 \\
	 		 	&			&P	&20	&250526		&79165		&228777	&49613		&			 &				&8.0	&63.8	   	\\
\hline															
\multirow{2}{*}{nth\_prime11}	&\multirow{2}{*}{11}	&R   	&-	&654910		&577721		&576047	&308413		&\multirow{2}{*}{625320} &\multirow{2}{*}{324005}	&7.8	&4.8		 \\
	 			&			&P	&22	&665132		&361756		&585165	&195500		&			 &				&6.4	&39.6	 	\\
\hline
\hline
\multicolumn{2}{|l}{\multirow{2}{*}{\textbf{Average}}} 	& \multicolumn{8}{c|}{} &   	 											\textbf{14.6}   &\textbf{3.8}	\\
\multicolumn{2}{|l}{}		 			& \multicolumn{8}{c|}{} &	 											\textbf{4.4} 	&\textbf{48.6} 	\\
\hline
\end{tabular}
}
\end{table*}		
Table \ref{t:tableRes1} and Table \ref{t:tableRes4} report the quantum cost (QC), the two-qubit cost (2-qubit) and the depth (Depth) for the synthesized circuits without and with limited interaction.
Since \cite{saeedi10} does not target the limited interaction in the LNN architecture, we used the method of \cite{saeedi11QIP} on the results of \cite{saeedi10} and ours to insert SWAP gates.
Runtime of \cite{saeedi10} and our method is less than one minute for the selected benchmarks. In the proposed method,
this time includes the time required for applying the distribution procedure in the parallel structure and the
time required for synthesis and applying the \emph{move} and \emph{delete} rules.
In the parallel structure, due to the qubit reordering in \cite{saeedi11QIP}, at most $3n(3n-1)$ SWAP gates are used between the input storing block, the subsets and the output restoring block to order lines.

As can be seen in Table \ref{t:tableRes4}, the effect of the post-process method is more significant for \cite{saeedi10} but altogether the results of the proposed LNN-based method are better than those of \cite{saeedi10} after applying \cite{saeedi11QIP} on both methods.
Notice that using negative controls does not allow to increase the quantum cost.  For odd permutations, one more ancilla should be added.
The two-qubit costs are compared in Table \ref{t:tableRes1} and the results show 13.6\% and 11.5\% improvement on average for the regular and parallel structures, respectively.
In the parallel structure, the average depth improvement of the \emph{N-th prime} benchmarks is less than that of \emph{hwbN} functions since the input cycles of those functions are unstructured with different cycle lengths which result in unbalanced subsets after distribution.
Input cycle distributions after decomposition (DCM) and distribution (DIST) are reported in Table \ref{t:tablecycles}.
For \emph{hwbN} functions, applying the distribution method leads to 3 sets with almost the same numbers of elementary cycles.
We report the circuit depth for each set along with the total depth after considering the effect of input storing and output restoring blocks in this table.
As reported in Table \ref{t:tablecycles}, function \emph{nth\_prime8} has one disjoint input cycle. Accordingly, the resulting elementary cycles should be assigned to one set by the proposed method.

In choosing the benchmark functions that were considered in this paper, the general guidelines presented in \cite{saeedi10} and \cite{saeedi2012} were considered. These guidelines stipulate that one of the scenarios in which the cycle-based methods render significantly superior results is when the input function contains permutations without regular patterns such as \emph{hwbN}, \emph{N-th prime} \cite{saeedi10} functions. For this reason, only the results of these functions are reported in this paper.
As for other functions in \cite{MaslovSite}, some are reported in \cite{saeedi10} along with a discussion on their suitability for the cycle-based approach (like Permanent). To avoid being repetitive, we did not include this set in this paper. There are yet other benchmarks that include important arithmetic functions like adders, multipliers and group arithmetic (e.g., in Galois Fields). Since the proposed cycle-based synthesis method is a general synthesis approach, it may not produce interesting results compared to other approaches specifically developed for those benchmark functions.

\section{Conclusion} \label{sec:conclu}
In this paper, a synthesis approach is proposed in order to reduce logical depth for architectures with limited interactions which applies a cycle-based approach to synthesize a given specification. The proposed method focuses on the interaction cost and depth besides the traditional quantum cost metric as a multi-objective view in the large picture. To achieve this, we redesigned the elementary cycles in \cite{saeedi10} with negative controls and limited interaction between gate lines. Moreover, a new parallel circuit structure was proposed for reversible logic in the presence of several ancillae registers. Altogether, the mentioned structure, which can be used with other synthesis methods, filling with the proposed cycle-based synthesis method for interaction cost leads to our whole flow for depth-optimized reversible circuit synthesis.

A given permutation is written as a set of disjoint cycles to be used in the proposed parallel circuit structure. Then, the resulting cycles are distributed among the available $n$-line registers based on the bin packing problem. The cycles are then synthesized on the assigned registers independently.
Our experiments and analysis show the effectiveness of the proposed approach with and without the interaction cost limitations for the attempted benchmarks and in the worst-case.


\bibliographystyle{unsrt}

\end{document}